\documentclass[journal]{IEEEtran}
\usepackage{amsmath,amssymb,amsfonts}
\usepackage{algorithm}
\usepackage{algorithmic}
\usepackage{graphicx}
\usepackage{textcomp}
\usepackage{xcolor}
\usepackage{array}
\usepackage{multirow}
\usepackage{subcaption}
\usepackage{booktabs}
\usepackage{cite}
\usepackage[hidelinks]{hyperref}
\newtheorem{proposition}{Proposition}

\begin{document}

\title{Energy-Aware Compression-Computation Co-Adaptation for Latency Minimization in Multi-User Semantic Communication}

\author{Loc X. Nguyen, Yumin Park, Avi Deb Raha, Huy Q. Le, Zhu Han,~\IEEEmembership{Fellow,~IEEE}, \\Eui-Nam Huh,~\IEEEmembership{Senior Member,~IEEE}, and Choong Seon Hong,~\IEEEmembership{Fellow,~IEEE}%

\thanks{Loc X. Nguyen, Yumin Park, Avi Deb Raha, Huy Q. Le, Eui-Nam Huh, and Choong Seon Hong are with the School of Computing, Kyung Hee University, Yongin-si, Gyeonggi-do 17104, Rep. of Korea, e-mail: \{xuanloc088, yumin0906, avi, quanghuy69, johnhuh, cshong\}@khu.ac.kr.}
\thanks{Zhu Han is with the Electrical and Computer Engineering Department, University of Houston, Houston, TX 77004, and also with the Department of Computer Science and Engineering, Kyung Hee University, Yongin-si, Gyeonggi-do 17104, Rep. of Korea, e-mail:{\{hanzhu22\}}@gmail.com}
}

\markboth{Journal of \LaTeX\ Class Files,~Vol.~14, No.~8, August~2021}
{Shell \MakeLowercase{\textit{et al.}}: A Sample Article Using IEEEtran.cls for IEEE Journals}

\maketitle

\begin{abstract}
Deep joint source-channel coding-enabled (DeepJSCC) semantic communication (SemCom) has excelled at delivering high perceptual quality at low channel-bandwidth ratios, which positions it as a pillar for next-generation wireless networks. However, the existing works have difficulty accommodating user heterogeneity in terms of communication channel quality, expected quality-of-service (QoS) targets, and the available local energy. Therefore, in this paper, we explicitly reflect the heterogeneity of user devices in terms of the differences in expected QoS, channel condition, and local energy, and then mathematically formulate the problem. Next, we propose an energy-aware compression-computation co-adaptation (CoCo) framework, in which the base station can meet the expected user QoS by transmitting a longer signal or offloading the task to a local device. The user has to dedicate energy to denoising the signal to recover higher-fidelity latent features before feeding it to the semantic decoder. To solve the formulated problem, we first decompose it into two sub-problems: parameter optimization and resource allocation problems. Specifically, we propose a robust codec that effectively works under a diversity of compression rates and channel noise without re-training, while the greedy sub-carrier allocation lowers the communication time. Finally, we present simulation results on standard image datasets over additive white Gaussian noise to demonstrate the effectiveness of CoCo, which reduces total latency relative to rate-only adaptive DeepJSCC or denoising-only, thereby ensuring the demands of each individual user are met. 
\end{abstract}

\begin{IEEEkeywords}
Semantic communication, deep joint source-channel coding, latency minimization, latent diffusion denoising, energy-aware resource allocation, multi-user OFDMA.
\end{IEEEkeywords}

\section{Introduction}
\IEEEPARstart{W}{ith} the rapid development of immersive media, machine-to-machine interaction, and edge intelligence, sixth-generation (6G) wireless networks are expected to move beyond the bit-faithful transmission paradigm toward communication that preserves \emph{meaning}~\cite{9770094,9864327,11192484}. Semantic communication (SemCom) can be described as a post-Shannon paradigm in which the transmitter sends a compact signal/relevant representation of the source rather than its exact bit sequence~\cite{qin2021semantic,9955525}. For image and video services in particular, deep joint source-channel coding (DeepJSCC) has emerged as a powerful realization of this idea: a neural encoder maps the source directly to channel symbols and a neural decoder reconstructs it, jointly optimized end-to-end so that perceptual quality remains high under the channel noise rather than collapsing at a cliff~\cite{8723589,11263916,9398576}.

A defining strength of DeepJSCC is its ability to operate at very low channel-bandwidth ratios (CBRs) while retaining high reconstruction quality. Recent designs further make the codec \emph{adaptive}: a single model conditions its behavior on the instantaneous signal-to-noise ratio (SNR)~\cite{9438648} and on a target transmission rate~\cite{9791398,10589474}, and so the same network serves a range of channels and bandwidth budgets without retraining. In a scenario that serves a single user, this rate adaptation can be the solution: the transmitter simply sends as many semantic symbols as the channel and the quality target demand.

However, in practice, a single base station (BS) typically serves many users simultaneously over a shared, finite spectrum. In a realistic cell, users differ in both channel quality (each user experiences its own SNR) and in their quality-of-service (QoS) expectations. Under the rate adaptation scenario~\cite{10382545}, the only way to satisfy a user with a weak channel or a stringent target is to transmit more channel symbols, which consume more communication resources or otherwise increase transmission time~\cite{10327757}. Given the limited set of orthogonal subcarriers and communication resources, the BS's resource allocation must be carefully designed to meet all QoS demands. Regardless of the allocation design, this scenario scales poorly: the delay grows with the number of highly demanding users, the number of users in low-SNR conditions, and is aggravated by the limited available bandwidth.

Therefore, an alternative solution has been studied, inspired by a previous task-offloading approach~\cite{10004947}. Specifically, a user device with increasing computing capacity can utilize computing resources to recover a high-quality signal from a noisy, heavily compressed one. Diffusion and other iterative denoisers have shown that latent representations corrupted by noise can be progressively refined toward a clean manifold at a cost paid in computing resources rather than spectrum alone~\cite{NEURIPS2020_4c5bcfec,Rombach_2022_CVPR,10480348,wang2025latent,11505923}. This reflects a broader shift in SemCom research, from optimizing communication alone toward also exploiting computation. For instance, the work~\cite{10978707} considered the reasoning capacity of the semantic receiver to fill in the missing representations when the communication links are unavailable. While the work~\cite{10999070} utilizes deep reinforcement learning (DRL) to adjust the depth of the transformer model, deeper layers reduce communication but sacrifice computation. Nevertheless, the signal refinement step or deeper blocks consume the device's energy and add processing delay, a cost that neither work accounts for under the heterogeneous energy budgets of a multi-user network.

Consequently, in this paper, we formulate an optimization problem for minimizing the total latency across different users with a wide range of QoS demands. Then, we propose a co-adaptation framework that can effectively allocate communication and computation resources in such a way that all the QoS demands from users are met. Specifically, we propose the iterative latent denoiser running on the user device; a user who can afford the energy may compute more and receive fewer symbols, shortening its communication time and reducing its bandwidth usage. On the other hand, a user with limited energy for denoising can receive a longer signal. Therefore, the BS has to decide the number of local denoising steps, the signal length, and the subcarriers that minimize the total transmission time, subject to each user's QoS and to its device energy budget. The main contributions of this paper are summarized as follows:
\begin{itemize}
  \item We formalize an optimization framework for a semantic communication system that explicitly reflects the user's heterogeneity in terms of the different QoS demands, the wireless channel condition, and the available energy. Specifically, we consider that users with high QoS demands can be facilitated in two ways: either local computing to refine the signal, or a longer received signal that requires more communication resources. To the best of our knowledge, this is the first work to consider dedicating both wireless network resources and user computation resources to address the heterogeneity in user demand.
  \item We design a single DeepJSCC codec with a priority latent representation, conditioned on both SNR and compression rate, so that one model covers all transmission rates and wireless conditions. At the receiver, we also attach the SNR and compression rate to the iterative latent denoiser that performs noise diffusion refinement on the user device. The number of denoising steps is limited by the device's energy, and this problem is a mixed-integer nonlinear program. Therefore, we develop a low-complexity energy-aware greedy allocation that selects each user's rate, steps, and subcarriers to minimize the latency.
  \item Extensive simulations on standard image datasets over an AWGN channel demonstrate that the denoiser improves reconstruction quality most at low rate and low SNR, and the proposed co-adaptation reduces total transmission latency compared with rate-only and denoising-only adaptive DeepJSCC scenarios, while ensuring the QoS for every single user in the limited wireless bandwidth.
\end{itemize}

The rest of this paper is organized as follows. Section~\ref{sec:related} reviews related work. Section~\ref{sec:sysmodel} presents the multi-user system model, including the adaptive-rate codec, the shared-subcarrier transmission model, the on-device iterative-denoising receiver, and the latency model. Section~\ref{sec:problem} formalizes the compression-computation complementation and the latency-minimization problem. Section~\ref{sec:method} details the proposed CoCo framework, its two-stage training, and the greedy allocation. Section~\ref{sec:results} reports results, and finally Section~\ref{sec:conclusion} concludes the paper.

\section{Related Work}\label{sec:related}
\subsection{Deep Joint Source-Channel Coding}
The term DeepJSCC was introduced by Bourtsoulatze~\emph{et al.}~\cite{8723589}, who showed that a convolutional encoder-decoder trained end-to-end over a noisy channel outperforms separate source and channel coding for wireless image transmission, particularly in the low-SNR and low-bandwidth regimes and without a cliff effect. Various works have inherited the idea of joint source-channel optimality and developed further techniques to improve the performance, such as work~\cite{9746335}, which considers channel feedback in the encoding/decoding process to improve the system's robustness to noise. In addition, the authors of~\cite{10094735} adopted the Swin-transformer model for the source encoder/decoder to better capture the semantic meaning, while having lower computing complexity compared to conventional transformers. These aforementioned works mainly focus on image transmission, and some works has extended it to text~\cite{9398576,9791409} and speech~\cite{9450827}. However, these works optimize the encoder-decoder pair and implicitly treat the receiver as a single deterministic forward pass, overlooking user heterogeneity in QoS demand, local computations, and their potential to improve system performance.

\subsection{Rate and SNR Adaptive SemCom}
To avoid storing one set of parameters for the model to deal with various channel conditions, adaptive DeepJSCC conditions a single network on side information. To be specific, Attention DeepJSCC (ADJSCC)~\cite{9438648} reweights feature maps by the instantaneous SNR through squeeze-and-excitation-style modules, so one model can adjust its encoding/decoding process to adapt to a wide range of SNRs. Rate adaptation has been achieved by content-aware symbol allocation~\cite{9791398}, by transformer models that expose multiple rates~\cite{10589474}, by adaptive rate control~\cite{10436878}, and by progressive or successive-refinement transmission~\cite{8815416,9464731}. Specifically, nonlinear transform source-channel coding~\cite{9791398} introduced learned entropy models that allocate channel symbols across the latent according to its content, yielding state-of-the-art rate-distortion performance. Adopting the advancement in DL architecture,~\cite{10436878} leveraged the Swin transformer to propose a novel bandwidth and channel-quality-adaptive scheme.~\cite{8815416} presented the first work that considers progressive image transmission with different complexities; later in~\cite{9464731}, they developed a more general system scenario, where the images are transmitted progressively in layers, either in order or in any order. Recently, the author in~\cite{11587127} proposed DeepJSCC for satellite communication, which adjusts the compression rate to meet the sensing requirement for satellite tasks. Closest to our proposal, the predictive and adaptive deep coding (PADC) framework~\cite{10015684} selects, for a single image, the minimal code rate that meets a target peak signal-to-noise ratio (PSNR), thereby minimizing bandwidth under a quality constraint. These designs enable the transmitter to adjust the amount of transmitted information to achieve a target reconstruction quality. However, in all of these approaches, the desired quality is improved solely by allocating more channel symbols. When multiple users compete for limited spectrum resources, improving the performance of one user inevitably requires reducing the transmission resources available to others. In contrast, the complementary option of exploiting receiver-side computation has not been explored.

\subsection{Generative and Diffusion-Aided Receivers}
Denoising diffusion probabilistic models (DDPMs)~\cite{NEURIPS2020_4c5bcfec} and latent diffusion models~\cite{Rombach_2022_CVPR} generate or restore signals by iteratively removing noise, with a number of reverse steps that directly trade computation for output quality. This iterative, compute-scalable structure has begun to enter wireless communications: channel denoising diffusion models (CDDM)~\cite{10480348} learn to remove channel-induced noise from received symbols, latent-diffusion receivers perform channel-adaptive equalization and denoising~\cite{10910031}, and generative SemCom~\cite{11505923,10158995} reconstructs perceptually faithful content from minimal transmitted information. Similarly, authors in~\cite{11578137} designed a one-step diffusion model at the receiver to provide a reliable semantic communication system, while~\cite{10891405} considers the number of denoising steps conditioned on a similarity score between the received and transmitted features. Despite this progress, diffusion-aided receivers have been studied as quality enhancers for a single compression rate; their per-user step count has been overlooked and not considered together with the rate-adaptive approach to facilitate heterogeneous user demand. In this paper, we formulate a problem that combines both approaches, which can complement each other in bandwidth-limited or energy-limited scenarios.

\subsection{Resource Allocation for Multi-User SemCom}
A growing literature allocates physical-layer resources for SemCom, e.g., power, subcarriers, or the number of transmitted semantic symbols, to maximize a semantic-aware QoS or quality of experience (QoE)~\cite{9763856,9955525}, and serves multiple users with task-oriented semantic codecs~\cite{9830752}. Latency- and energy-aware formulations are likewise central to SemCom-enabled networks, where minimizing transmission delay under fidelity constraints is the design goal~\cite{11039171}. These formulations optimize how transmission resources are split among users.~\cite{11274758} proposed a DRL framework to select the compression rate for each individual user to balance semantic accuracy, latency, and energy consumption. A few recent works do consider computation alongside communication: receiver reasoning can recover undelivered information in multi-user SemCom~\cite{10978707}, and a computation-communication tradeoff metric can be optimized via deep reinforcement learning~\cite{10999070}. However, these either assume a fixed receiver pipeline and optimize a single transmission axis, or trade computation for communication through task-level reasoning or a system-level resource metric rather than through a runtime, retransmission-free refinement knob. None couples the rate-adaptive image codec with an on-device iterative denoiser and co-allocates the denoiser's per-user step budget-bounded by device energy-against transmission time under a shared spectrum so as to minimize total latency, which is precisely the joint problem this paper addresses. We bridge the adaptive-codec, diffusion-receiver, and resource-allocation strands by exposing on-device computation as a substitute for communication time and allocating the two jointly.

\section{System Model}\label{sec:sysmodel}
\subsection{Multi-User Scenario}
We consider the downlink of a single BS serving a set $\mathcal{K}=\{1,2,\dots,K\}$ of users over a shared set $\mathcal{N}=\{1,2,\dots,N\}$ of orthogonal subcarriers, as illustrated in Fig.~\ref{fig:scenario}. As in orthogonal frequency-division multiple access (OFDMA), each subcarrier of bandwidth $B$ is assigned to at most one user at a time, and so the links do not interfere and the spectrum the BS spends is the shared, finite pool of $N$ subcarriers. The BS has a source image $\mathbf{s}_k \in \mathbb{R}^{C_{\mathrm s}\times H\times W}$ and wants to transmit to user $k$, where $C_{\mathrm s}=3$ is the number of color channels and $H,W$ are the spatial dimensions. Each user $k$'s heterogeneity is characterized by three factors: its channel SNR $\rho_k$ assumed known due to channel estimation/feedback; its quality target expressed as a minimum QoS $QoS_k^{\min}$ on a single scalar quality metric; and its device energy budget $E_k$. In general, the heterogeneity of users is reflected in three aspects: wireless channel condition, expected QoS, and available device energy. The BS employs a single semantic-channel encoder $f_\theta$, while the semantic-channel decoder $f_\psi$ and an iterative denoiser $g_\phi$ are deployed at the users. For each individual user, the BS has to select three control variables: a transmission rate $r_k$, the number of receiver denoising steps $m_k$, and finally the subcarriers allocated to that user so that the user's expected quality is guaranteed. 

\begin{figure*}[t]
\centering
    \includegraphics[width=1\textwidth]{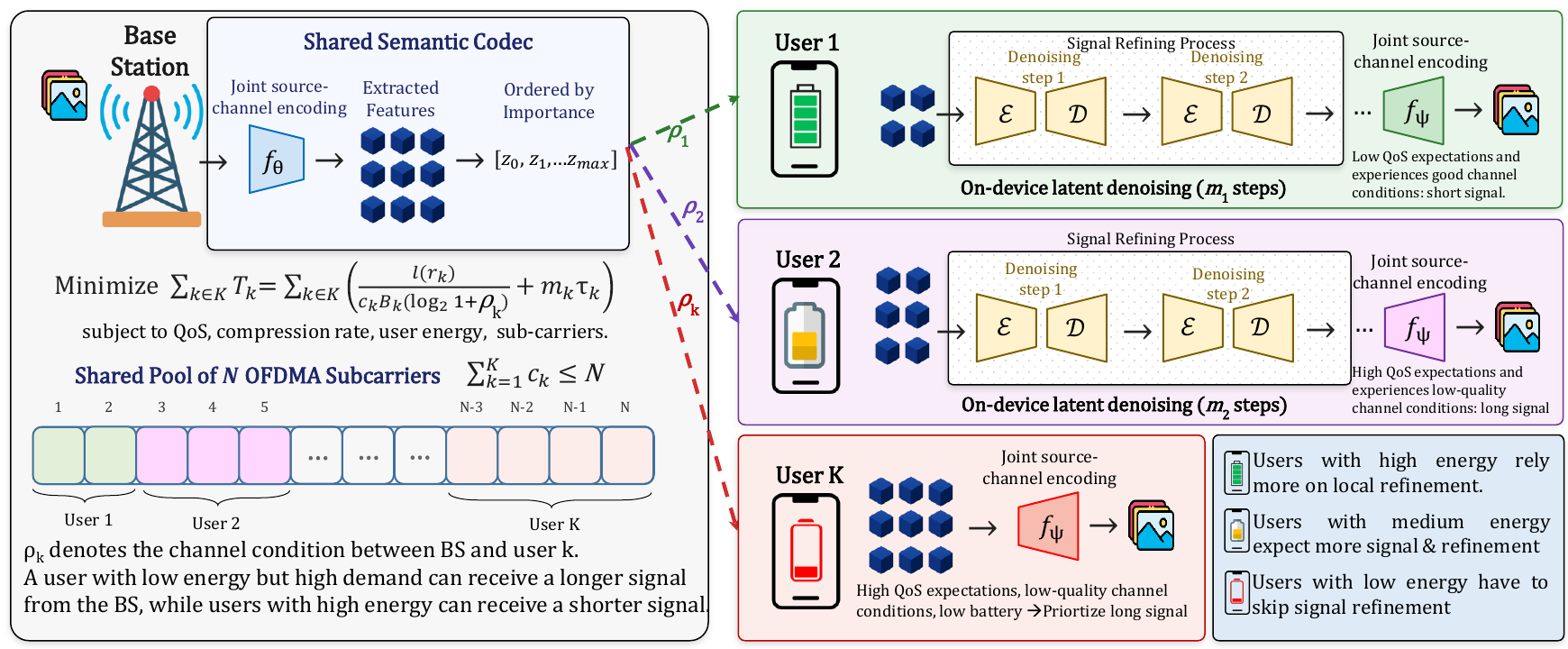}
\caption{Our proposed system: each user can reach its quality target by receiving more transmitted symbols or by performing more on-device denoising. In a multiple-user setting, the BS has to allocate the communication resource among users and offload the computation task to local devices. }
\label{fig:scenario}
\end{figure*}

\subsection{Adaptive-Rate Semantic Encoding}
The BS encoder maps the source image to a latent space:
\begin{equation}
\mathbf{Z}_k = f_\theta(\mathbf{s}_k;\,\rho_k,\,r_k) \in \mathbb{R}^{C_{\max}\times H'\times W'},
\label{eq:encoder}
\end{equation}
where $C_{\max}$ is the maximum number of latent channels, $H'=H/2^{J}$ and $W'=W/2^{J}$ are the spatial dimensions after downsampling stages, and $\theta$ denotes the encoder parameters. The encoder is conditioned on both the SNR $\rho_k$ and the target rate $r_k$ through feature-modulation modules, so that a single network can work across channels and rates.

After the semantic features are extracted, they are ordered by their importance, so transmitting only the first $r_k$ of them produces a valid lower-rate codeword. The transmitted latent is the prefix:
\begin{equation}
\mathbf{Z}_k^{(r_k)} = \mathbf{Z}_k[1{:}r_k,:,:],\qquad r_k \in \mathcal{R},
\label{eq:slice}
\end{equation}
where $\mathcal{R}=\{r^{(1)}<\dots<r^{(L)}\}\subseteq\{1,\dots,C_{\max}\}$ is the discrete set of supported rates. After power normalization to unit average power, the prefix is mapped to a complex channel-input vector $\mathbf{x}_k\in\mathbb{C}^{n_k}$ of length $n_k$. The CBR is denoted as:
\begin{equation}
\mathrm{CBR}_k \;=\; \frac{n_k}{C_{\mathrm s} H W} \;=\; \frac{r_k\,H'W'}{2\,C_{\mathrm s} H W}\;\propto\; r_k,
\label{eq:cbr}
\end{equation}
linear in the rate $r_k$. The number of symbols the BS must transmit over the wireless environment for user $k$ is therefore controlled directly by how many latent channels are transmitted. The transmitted payload size $\ell(r_k)=\mu\, n_k \propto r_k$, where $\mu$ is the number of bits per channel symbol after digital mapping.

\subsection{Shared-Subcarrier Transmission Model}
Following the OFDMA model, let $a_{k,n}\in\{0,1\}$ indicate that subcarrier $n$ is assigned to user $k$, and let $c_k=\sum_{n\in\mathcal{N}}a_{k,n}$ be the number of subcarriers allocated to user $k$. Each subcarrier serves at most one user, as expressed below:
\begin{equation}
\sum_{k\in\mathcal{K}} a_{k,n} \le 1,\quad \forall n\in\mathcal{N},
\qquad
\sum_{k\in\mathcal{K}} c_k \le N,
\label{eq:subcarrier}
\end{equation}
where the second inequality indicates that the number of allocated subcarriers is less than or equal to the shared spectrum. With $c_k$ subcarriers and the SNR $\rho_k$ value, the achievable downlink rate from BS to user $k$ is calculated as~\cite{goldsmith2005overview}:
\begin{equation}
R_k \;=\; c_k\, B \,\log_2\!\big(1+\rho_k\big),
\label{eq:rate}
\end{equation}
in bits/s, consistent with the latency model of prior SemCom resource-allocation studies~\cite{11039171}. As a result, the transmission time from BS to user $k$ is given as:
\begin{equation}
t_k^{\mathrm{tx}} \;=\; \frac{\ell(r_k)}{R_k} \;=\; \frac{\mu\, r_k H'W'/2}{c_k\, B_k \, \log_2(1+\rho_k)},
\label{eq:ttx}
\end{equation}
which depends on the number of transmitted latent features and the subcarrier bandwidth. With the estimated SNR $\rho_k$ for the wireless channel quality, the BS can approximate the achievable rate, predict reconstruction quality at the user under those particular conditions, and determine the lowest CBR so the user meets the QoS with the minimum latency.

\subsection{Receiver: On-Device Iterative Latent Denoising}
The received latent features are corrupted by two effects: channel fading and noise. Depending on the BS decision, the number of transmitted prefixes over a channel is denoted as 
\begin{equation}
\mathbf{y}_k = h_k\,\mathbf{x}_k + \mathbf{n}_k,
\label{eq:channel}
\end{equation}
where $h_k$ is the channel fading (it is equal to 1 for AWGN and follows $\mathcal{CN}(0,1)$ for Rayleigh fading) and $\mathbf{n}_k \sim \mathcal{CN}(0,\sigma_k^2\mathbf{I})$ with $\sigma_k^2 = 10^{-\rho_k/10}$ under unit-power transmission. The received symbols are reshaped into a noisy latent $\widehat{\mathbf{Z}}_k^{(r_k)}\in\mathbb{R}^{r_k\times H'\times W'}$ and zero-padded along the channel axis to the full width $C_{\max}$, giving a fixed-size tensor $\widehat{\mathbf{Z}}_k^{(0)}\in\mathbb{R}^{C_{\max}\times H'\times W'}$. 

On the local device side, user $k$ can mitigate the channel noise with an iterative latent denoiser $g_\phi$. This module is conditioned on the SNR and compression rate to facilitate the ability to progressively refine the received features in a wide range of conditions. Starting from $\widehat{\mathbf{Z}}_k^{(0)}$, it applies $m_k$ residual refinement steps as in the following equation:
\begin{equation}
\widehat{\mathbf{Z}}_k^{(j)} = \widehat{\mathbf{Z}}_k^{(j-1)} + g_\phi\!\left(\widehat{\mathbf{Z}}_k^{(j-1)};\,\rho_k,\,r_k\right),\quad j=1,\dots,m_k,
\label{eq:denoise}
\end{equation}
where $\phi$ denotes the denoiser parameters and $m_k\in\mathcal{M}=\{0,1,\dots,M\}$ is the number of denoising steps. Each step pushes the latent toward the clean feature manifold, in the spirit of the progressive denoising performed by diffusion models~\cite{NEURIPS2020_4c5bcfec,Rombach_2022_CVPR}. The refined latent is then decoded:
\begin{equation}
\hat{\mathbf{s}}_k = f_\psi\!\left(\widehat{\mathbf{Z}}_k^{(m_k)};\,\rho_k,\,r_k\right),
\label{eq:decoder}
\end{equation}
where $\psi$ denotes the decoder parameters and $\hat{\mathbf{s}}_k$ is the reconstructed image.

\subsection{Quality, Latency, and Energy}
With the semantic communication framework in place, we define a training loss to optimize the parameters of the encoder, decoder, and denoising module as follows: 
\begin{equation}
    \mathcal{L}= \textrm{MSE} (s_{k}, \hat{s}_{k}),\label{eq:MSE}
\end{equation}
where the objective is that the reconstructed image is as close to the original as possible. Here, we denote the QoS by a specified single scalar quality metric as  $D(\mathbf{s}_k,\hat{\mathbf{s}}_k)$. Since the channel fading and noise of the wireless environment are random, the QoS for the user can be written as the expected function as follows:
\begin{equation}
\textrm{QoS}(r,m;\rho) \;=\; \mathbb{E}_{\mathbf{s},\,\mathbf{n}}\!\left[\, D\big(\mathbf{s},\,\hat{\mathbf{s}}(r,m;\rho)\big)\,\right],
\label{eq:Dbar}
\end{equation}
i.e., the average QoS when rate $r$ is transmitted and $m$ denoising steps are applied at SNR $\rho$. Sending more symbols can provide extra information for the decoder to interpret, so $\textrm{QoS}(\cdot,\cdot;\rho)$ is increasing in $r$. Similarly, the step count $m$ can provide extra signal refinement, and we further enforce it by the training design so that each additional denoising step is expected to increase the QoS.

Each denoising step in Eq.~\eqref{eq:denoise} costs a fixed amount of computation $\phi_0$, executed on user $k$'s device, which processes $F_k$ operations per second at an energy cost of $\kappa_k$ per operation. The time and energy consumption for one step are given as:
\begin{equation}
\tau_k = \frac{\phi_0}{F_k},\qquad e_k = \kappa_k\,\phi_0,
\label{eq:taue}
\end{equation}
and so $m_k$ steps incur a processing delay $m_k\tau_k$ and consume energy $m_k e_k$. The number of denoising steps at the local device is bounded by the available energy budget:
\begin{equation}
m_k\, e_k \le E_k \;\Longleftrightarrow\; m_k \le M_k \triangleq \big\lfloor E_k/e_k \big\rfloor,
\label{eq:energy}
\end{equation}
where $M_k$ is the maximum number of steps user $k$ can afford. The \emph{end-to-end delivery latency} of user $k$ is the sum of its communication time and its on-device processing delay,
\begin{equation}
T_k \;=\; t_k^{\mathrm{tx}} + m_k\tau_k \;=\; \frac{\ell(r_k)}{c_k\, B\, \log_2(1+\rho_k)} + m_k\tau_k,
\label{eq:Tk}
\end{equation}
which reduces to the pure transmission time when $m_k=0$. Therefore, the total latency among all the users in the network can be calculated as:
\begin{equation}
T(\{r_k,m_k,c_k\}) \;=\; \sum_{k\in\mathcal{K}} T_k.
\label{eq:Ttot}
\end{equation}


\section{The Compression-Computation Co-adaptation and Problem Formulation}\label{sec:problem}

\subsection{Per-User Operating Points}
For a user with channel condition $\rho$, a target QoS requirement $\Psi$, and an available energy budget $E$, the BS must determine how to satisfy the required reconstruction quality. When transmitting the default number of semantic symbols is insufficient to achieve the target QoS, the system has two options. The first way is to increase the transmission length by sending more latent symbols so that the receiver can interpret the feature correctly, which improves reconstruction quality at the cost of additional communication resources. The second way is to keep the transmission rate low and offload the task to the user device, where it performs iterative latent denoising to remove the noise from the wireless environment before decoding, thereby improving the QoS through additional local computation. The first option consumes more wireless resources, whereas the second consumes more computation time and energy on the user device. Based on these two mechanisms, we define the feasible operating set:
\begin{equation}
\begin{aligned}
\mathcal{F}(\rho,\Psi,E)
= \big\{(r,m)\in\mathcal{R}\times\mathcal{M}: \textrm{QoS}\ge \Psi, m\le M\big\},
\end{aligned}
\label{eq:feasible}
\end{equation}
which contains all compression rate and denoising-step pairs that satisfy the user's QoS requirement while remaining within the available energy budget.

Instead of relying on a single approach, we adopt a hybrid approach that addresses the user demand by using both the compression rate and local denoising computation. Specifically, our feasible set provides a comprehensive relation between the compression rate and the number of denoising steps so that the BS can choose to balance communication and computation for the user to minimize the latency. Specifically, for the user's QoS, we need to determine the optimal pair of compression rate and denoising steps so that we do not overconsume the bandwidth resource or overuse the local computation. For example, when the actual QoS of the user is just a little behind the demand, and the communication resource is not available for extra signal, we rely on the denoising module. On the other hand, when the demand cannot be met by using a single approach, we can utilize both. Finally, for devices with limited energy, the BS can actively transmit at higher rates.

\subsection{Latency Minimization Problem}
With the knowledge of expected QoS, the available energy, and the wireless channel condition for each individual user, the BS has to decide the compression rate $\{r_{k}\}$ and the number of denoising steps $\{m_{k}\}$ at local devices, and a subcarrier assignment $\{a_{k,n}\}$, so that it can accommodate all the user demand for image quality while minimizing the to the average total latency across user devices. In addition, we have to obtain the optimal parameter sets for the BS encoder, the plug-and-play denoising module, and finally the user decoder. Therefore, we can formulate the problem as:
\begin{equation}
\begin{aligned}
\min_{\theta, \phi, \psi, \mathbf{r},\mathbf{m},\mathbf{a}}\;\; & \sum_{k\in\mathcal{K}} T_k 
   = \sum_{k\in\mathcal{K}}\!\left(\frac{\ell(r_k)}{c_k B_k \log_2(1+\rho_k)} + m_k\tau_k\right) \\[-1pt]
\text{s.t.}\;\; & C_1:\; \textrm{QoS}(r_k,m_k;\rho_k) \geq \Psi_k,  \forall k\in\mathcal{K},\\
& C_2:\; m_k\, e_k \le E_k, \forall k\in\mathcal{K},\\
& C_3:\; \sum_{k\in\mathcal{K}} a_{k,n} \le 1,  \forall n\in\mathcal{N},\\
& C_4:\; c_k=\!\!\sum_{n\in\mathcal{N}}\! a_{k,n} \ge 1, \forall k\in\mathcal{K},\\
& C_5:\; r_k\in\mathcal{R},\; m_k\in\mathcal{M},\; a_{k,n}\in\{0,1\}, \forall k,n,
\end{aligned}
\tag{P1}
\label{eq:P1}
\end{equation}
where $C_1$ enforces each user's QoS, $C_2$ caps on-device computation by the device energy budget, $C_3$ keeps the subcarrier allocation within the shared spectrum, $C_4$ guarantees each served user at least one subcarrier, and $C_5$ restricts the controls to the supported discrete sets. Problem (P1) makes the compression-computation tradeoff operational: a user may shorten $T_k$ by lowering $r_k$ (fewer symbols), but only if it spends enough steps $m_k$-paid in energy through $C_2$ and in processing delay through the $m_k\tau_k$ term-to keep $C_1$ satisfied, or else by acquiring more subcarriers $c_k$ from the shared, capacity-limited pool through $C_3$. 

Problem (P1) is a mixed-integer nonlinear program (MINLP), the binary subcarrier-assignment variables $a_{k,n}$ with discrete variables $(r_k,m_k)$, and its objective is nonlinear in these variables. Three features make it intractable to solve directly. First, the $\textrm{QoS}(r,m;\rho)$ has no closed form; it is defined by the learned codec-denoiser pair, so the QoS region $C_1$ must be obtained empirically. Second, the subcarrier assignment under $C_3$-$C_4$ is itself a combinatorial allocation, and jointly choosing assignments and per-user operating points is NP-hard in general~\cite{martello1990knapsack}. Third, and unlike the bandwidth-only setting, the problem \emph{does not cleanly decouple}: because the communication term scales as $1/c_k$, the latency for a user depends on how many subcarriers it receives.

\section{Proposed Framework: CoCo}\label{sec:method}

The proposed compression-computation co-adaptation framework comprises three parts: an adaptive-rate DeepJSCC that effectively adjusts transmission length is conditioned on SNR and rate; the iterative latent denoiser that runs on the user device with the number of denoising steps bounded by the device's energy; and finally, an allocation module that assigns each user a compression rate, steps, and a number of subcarriers by solving~(\ref{eq:P1}). We divide the problem into two sub-problems: the first one is the parameter optimization for ($\theta,\psi,\phi$); the second sub-problem is resource allocation for users in the network by controlling the compression rate, number of denoising steps, and the sub-carriers ($\mathbf{r},\mathbf{m},\mathbf{a}$). The first sub-problem is trained offline and then uses the frozen parameters for the scheduling problem, which is an online allocation phase that can be repeated over time.

\subsection{Adaptive Coder, Denoising Model, and Training}

\subsubsection{Adaptive-Rate Codec with Priority}
The encoder and decoder are convolutional networks conditioned on the SNR $\rho$ and the compression rate $r$, which enable a single model to serve all channel conditions and communication rates. Specifically, the extracted features are ordered by importance, and then, based on the determined compression rate, the model transmits only the number of features that correspond to the rate. During training, we randomly sample the rate for each batch to order important content in sequence. Therefore, at the inference stage, changing the rate simply changes the number of features being transmitted, which eliminates the need for retraining.

\subsubsection{On-Device Iterative Latent Denoiser}
We construct a small denoiser network to refine the received features based on the latent denoising model~\cite{Rombach_2022_CVPR,10480348,NEURIPS2020_4c5bcfec}. In general, our diffusion only has a total of four layers: one input projection layer, two convolution layers, and finally an output projection with a skip connection, which makes our denoising model lightweight and suitable for a user device. Additionally, we conditioned the denoiser with two pieces of information: noise level and communication rate. Each piece of conditioning information is useful in a different way for the denoising model: the noise level signals how corrupted the signal may be during the wireless transmission; the communication rate informs the network which units to refine and which units are zero-padded rather than transmitted features. It is worth noticing that the same weight of the denoising model are applied repeatedly to the received feature to remove noise for multiple steps, which is a unique property of the diffusion model. Ideally, more steps can yield better quality at the cost of more computing and energy consumed, which is constrained by the user's energy budget.

\subsubsection{Two-Stage Training}\label{sec:training}
We train CoCo in two stages, summarized in Algorithm~\ref{alg:train}. In Stage~1, the encoder and decoder are trained end-to-end over the channel with the denoiser bypassed ($m=0$); each mini-batch samples a random rate $r\in\mathcal{R}$ and a random SNR $\rho$ from the operating range, so the single codec learns to serve all rates and channels. The objective is the expected loss
\begin{equation}
\mathcal{L}_{\mathrm{codec}}(\theta,\psi) = \mathbb{E}_{\mathbf{s},\,r,\,\rho,\,\mathbf{n}}\big[\, \ell_{\mathrm{tr}}\big(\mathbf{s},\,\hat{\mathbf{s}}(r,0;\rho)\big)\,\big],
\label{eq:loss_codec}
\end{equation}
where $\ell_{\mathrm{tr}}$ is conventional training loss as shown in Eq.~(\ref{eq:MSE}). After Stage~1, we obtain a complete and self-contained adaptive DeepJSCC system. In Stage~2, the encoder $\theta$ is frozen, while the parameters of the denoiser $\phi$ and the decoder $\psi$ are jointly optimized over a wide range of channel conditions and compression rates. As described above, the number of denoising steps is determined by BS and the user's energy rather than a fixed value; therefore, the optimization loss is given as follows:
\begin{equation}
\mathcal{L}_{\mathrm{rx}}(\phi,\psi) = \mathbb{E}_{\mathbf{s},\,r,\,\rho,\,\mathbf{n},\,m_{\mathrm{tr}}}\big[\, \ell_{\mathrm{tr}}\big(\mathbf{s},\,\hat{\mathbf{s}}(r,m_{\mathrm{tr}};\rho)\big)\,\big],
\label{eq:loss_den}
\end{equation}
where $m_{\mathrm{tr}}\!\sim\!\mathcal{U}\{0,\dots,M\}$. The above equation implies that optimization of the network is over all the possible values for the number of denoising steps, rather than the fixed one.

\begin{algorithm}[t]
\caption{Two-Stage Training of CoCo}
\label{alg:train}
\begin{algorithmic}[1]
\REQUIRE Training images $\mathcal{D}$; rate set $\mathcal{R}$; SNR range; maximum step count $M$
\ENSURE Encoder $\theta^\star$; receiver module $(\phi^\star,\psi^\star)$
\STATE \textbf{Stage 1: Codec (denoiser bypassed)}
\STATE Initialize $\theta,\psi$
\FOR{each mini-batch $\mathbf{s}\sim\mathcal{D}$}
  \STATE Sample rate $r\sim\mathcal{U}(\mathcal{R})$ and SNR $\rho$ from set of values
  \STATE Encode image follow~Eq.~\eqref{eq:encoder}, slice to $r$ as in Eq.~\eqref{eq:slice}, pass through wireless channel, decode with Eq.~\eqref{eq:decoder}.
  \STATE Update $\theta,\psi$ by minimizing $\mathcal{L}_{\mathrm{codec}}$ from Eq.~\eqref{eq:loss_codec}.
\ENDFOR
\STATE \textbf{Stage 2: Receiver refinement}
\STATE Freeze $\theta\leftarrow\theta^\star$; keep $\psi$ from Stage 1; initialize $\phi$
\FOR{each mini-batch $\mathbf{s}\sim\mathcal{D}$}
  \STATE Sample rate $r\sim\mathcal{U}(\mathcal{R})$, SNR $\rho$, and step count $m_{\mathrm{tr}}\sim\mathcal{U}\{0,1,\dots,M\}$
  \STATE Encode, slice, pass channel; unroll denoiser as in Eq.~\eqref{eq:denoise} for $m_{\mathrm{tr}}$ steps; decode
  \STATE Update $\phi$ \emph{and} $\psi$ by minimizing $\mathcal{L}_{\mathrm{rx}}$ from Eq.~\eqref{eq:loss_den}
\ENDFOR
\RETURN $\theta^\star,\phi^\star,\psi^\star$
\end{algorithmic}
\end{algorithm}

\subsection{Offline Operating-Point Profiling}
Since $\textrm{QoS}(r,m;\rho)$ has no closed form, we have to characterize it empirically in offline mode, after the parameter training. Specifically, for each compression rate $r\in\mathcal{R}$, each step count $m\in\mathcal{M}$, and each SNR $\rho$ in $\mathcal{S}$, we evaluate the performance of the trained encoder-denoiser-decoder chain over a set of images. Then, we  average the quality of all the reconstructed images and record it into a lookup table:
\begin{equation}
Q[r,m,\rho] \;\triangleq\; \widehat{\mathbb{E}}_{\mathbf{s},\mathbf{n}}\big[\,D\big(\mathbf{s},\hat{\mathbf{s}}(r,m;\rho)\big)\,\big],\;(r,m,\rho)\in\mathcal{R}\times\mathcal{M}\times\mathcal{S},
\label{eq:Qtable}
\end{equation}
where $\widehat{\mathbb{E}}$ denotes the empirical average. This table captures how image quality depends on compression rate, channel condition, and denoising step count.
\begin{algorithm}[t]
\caption{Energy-Aware Greedy Latency Allocation}
\label{alg:alloc}
\begin{algorithmic}[1]
\REQUIRE Users $\mathcal{K}$ with $(\rho_k,\Psi_k,E_k)$, profiled table $Q[r,m,\rho]$, $\mathcal{R},\mathcal{M}$, subcarriers $N$, bandwidth $B$, per-step $(\tau_k,e_k)$
\ENSURE Operating points $\{(r_k,m_k)\}$ and subcarrier counts $\{c_k\}$
\STATE \textbf{Step 1: Feasible sets}
\FOR{each user $k\in\mathcal{K}$}
  \STATE $M_k \leftarrow \lfloor E_k/e_k\rfloor$
  \STATE $\mathcal{F}_k \leftarrow \{(r,m)\in\mathcal{R}\times\mathcal{M} : Q[r,m,\rho_k]\ge \Psi_k,\, m\le M_k\}$
\ENDFOR
\STATE $\mathcal{K}_{\mathrm s} \leftarrow$ admitted users; \textbf{assert} $N \ge |\mathcal{K}_{\mathrm s}|$
\STATE \textbf{Step 2-3: Initialize and greedily allocate}
\FOR{each $k\in\mathcal{K}_{\mathrm s}$}
  \STATE $c_k \leftarrow 1$; compute $T_k^\star(c_k)$, $T_k^\star(c_k{+}1)$ via~\eqref{eq:bestresp}; $\Delta_k \leftarrow T_k^\star(c_k)-T_k^\star(c_k{+}1)$
\ENDFOR
\STATE Build max-priority queue $\mathcal{Q}$ over $\Delta_k$; $N_{\mathrm{rem}} \leftarrow N -|\mathcal{K}_{\mathrm s}|$
\WHILE{$N_{\mathrm{rem}} > 0$ \AND $\max_k \Delta_k > 0$}
  \STATE $k^\star \leftarrow \arg\max_k \Delta_k$ (pop from $\mathcal{Q}$)
  \STATE $c_{k^\star} \leftarrow c_{k^\star}+1$; \; $N_{\mathrm{rem}} \leftarrow N_{\mathrm{rem}}-1$
  \STATE recompute $T_{k^\star}^\star(c_{k^\star}{+}1)$; \; $\Delta_{k^\star} \leftarrow T_{k^\star}^\star(c_{k^\star})-T_{k^\star}^\star(c_{k^\star}{+}1)$; push to $\mathcal{Q}$
\ENDWHILE
\STATE Set $(r_k,m_k)\leftarrow(r_k(c_k),m_k(c_k))$ for all $k\in\mathcal{K}_{\mathrm s}$
\RETURN $\{(r_k,m_k)\}$, $\{c_k\}$
\end{algorithmic}
\end{algorithm}

\subsection{Energy-Aware Greedy Latency Allocation}\label{sec:alloc}
Substituting the trained parameters and the profiled table into (P1), we obtain the online resource-allocation problem:
\begin{equation}
\begin{aligned}
\min_{\mathbf{r},\mathbf{m},\mathbf{a}}\;\; & \sum_{k\in\mathcal{K}}\!\left(\frac{\ell(r_k)}{c_k B \log_2(1+\rho_k)} + m_k\tau_k\right) \\[-1pt]
\text{s.t.}\;\; & C_1'\!:\; Q[r_k,m_k,\rho_k] \geq \Psi_k,\;\; \forall k\in\mathcal{K},\\
& C_2\text{-}C_5 \text{ of (P1)},
\end{aligned}
\tag{P2}
\label{eq:P2}
\end{equation}
in which the intractable learned constraint $C_1$ of (P1) is replaced by the exact table lookup $C_1'$. Therefore, the decomposition can be considered lossless with respect to the QoS constraint. With the profiled table in hand, the BS solves (P2) with the low-complexity greedy algorithm summarized in Algorithm~\ref{alg:alloc}. It proceeds in three steps.

\emph{Step 1 (Feasible Sets):} For each user $k$, we construct its feasible operating set $\mathcal{F}_k$ by intersecting the QoS region $\{(r,m):\bar D[r,m,\rho_k]\le D_k^{\max}\}$ with the available energy $m\le M_k$. Let $\mathcal{K}_{\mathrm s}$ denote the admitted users and assume $N\ge|\mathcal{K}_{\mathrm s}|$ so each user can be allocated at least one sub-carrier.

\emph{Step 2 (Per-user best response):} For a given number of subcarriers $c$, we determine the rate-step pair $(r,m)$ that minimizes user $k$'s latency, along with the resulting latency value:
\begin{equation}
T_k^\star(c) = \!\!\min_{(r,m)\in\mathcal{F}_k}\! \left[\frac{\ell(r)}{c\,B\,\varepsilon_k} + m\,\tau_k\right],\quad \varepsilon_k\triangleq\log_2(1+\rho_k),
\label{eq:bestresp}
\end{equation}
where $(r_k(c),m_k(c))$ is the minimizing pair. As $c$ grows, the communication time reduces, so a user given more sub-carriers can receive more signal and denoise less. This dependence is exactly the coupling that prevents (P1) from decoupling. Since $\mathcal{F}_k$ is small, $T_k^\star(c)$ and its minimizer are obtained by a direct scan.

\emph{Step 3 (Greedy subcarrier allocation):} The BS first gives every admitted user one subcarrier, guaranteeing the constraint $C_4$, and then distributes the remaining  $N-|\mathcal{K}_{\mathrm s}|$ subcarriers one at a time. At each step, it computes the latency reduction for every user due to the additional subcarrier:
\begin{equation}
\Delta_k(c_k) = T_k^\star(c_k) - T_k^\star(c_k+1) \ge 0,
\label{eq:marginal}
\end{equation}
and then allocate the next subcarrier to the user with the largest reduction, then update the user's $c_k$, operating point, and marginal value. Using a max-priority queue keyed by $\Delta_k$, each assignment costs one pop and one push. The algorithm loop stops when all the sub-carriers have been allocated.

\subsubsection{Computational Complexity}
The resource allocator runs in two phases, which we provide the cost separately.

\emph{Feasible Set:} For every user the BS forms the feasible set $\mathcal{F}_k$ and reduces it to its Pareto-optimal latency envelope, the operating points $(r,m)$ that are not dominated in both communication and denoising delay. Each user costs $\mathcal{O}(|\mathcal{R}|\,|\mathcal{M}|)$, so the setup over all $K$ users costs $\mathcal{O}(K\,|\mathcal{R}|\,|\mathcal{M}|)$.

\emph{Greedy loop:} The loop distributes at most $N$ subcarriers, one per iteration. Each iteration extracts the user of largest marginal gain from the max-priority queue and updates its entry, which takes one queue pop/push at $\mathcal{O}(\log K)$ and one re-scan of that user's pruned envelope at $\mathcal{O}(|\mathcal{R}|)$ to recompute its best response. The loop therefore costs $\mathcal{O}\!\big(N(|\mathcal{R}|+\log K)\big)$. Treating the codec-set sizes $|\mathcal{R}|$ and $|\mathcal{M}|$ as small constants, the total per-decision cost is given as follows:
\begin{equation}
\mathcal{O}\!\big(K\,|\mathcal{R}|\,|\mathcal{M}| + N\log K\big),
\label{eq:complexity}
\end{equation}
which is linear in the number of users $K$ and in the number of subcarriers $N$. For comparison, exhaustive search conducts every combination of per-user operating points, costing $\mathcal{O}\!\big(\prod_k|\mathcal{F}_k|\big)$, which grows exponentially in $K$. On the other hand, a population metaheuristic that scores $I_{\max}$ candidate allocations each requiring an $\mathcal{O}(K\log K)$ subcarrier water-filling costs $\mathcal{O}\!\big(I_{\max} K\log K\big)$. The proposed greedy algorithm is cheaper than both approaches and fits comfortably within a per-slot scheduling budget.

\subsubsection{Optimality of the Greedy}
Problem (P1) is NP-hard in general, but the greedy is provably optimal in a
regime that explains why it performs so well in practice.

\begin{proposition}\label{prop:opt}
Suppose on-device denoising time is negligible relative to communication ($\tau_k\!\to\!0$ for all $k$), or equivalently, that the objective counts transmission time only. Then each user's latency-minimizing rate is its smallest feasible rate, $r_k^\star=\min\{r\in\mathcal{R}: (r,m)\in\mathcal{F}_k \text{ for some } m\}$, reached with the largest number of denoising steps the energy budget allows; and given these rate-step pairs, the subcarrier allocation produced by Algorithm~\ref{alg:alloc} is globally optimal for (P1).
\end{proposition}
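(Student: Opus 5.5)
With $\tau_k=0$ the objective of (P1)/(P2) reduces to $\sum_k \ell(r_k)/(c_k B\varepsilon_k)$, and $m_k$ no longer appears in it. I would therefore split the proof into two parts: the per-user choice of $(r_k,m_k)$ for a fixed $c_k$, and then the allocation of $\{c_k\}$.

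\emph{Per-user part.} Fix $c\ge 1$. In \eqref{eq:bestresp} we have $T_k^\star(c)=\min_{(r,m)\in\mathcal{F}_k}\ell(r)/(cB\varepsilon_k)$. Because $\ell(r)=\mu r H'W'/2$ is strictly increasing in $r$ and $cB\varepsilon_k>0$, the minimizing rate is the smallest rate in the projection of $\mathcal{F}_k$ onto $\mathcal{R}$, namely $r_k^\star$. This rate does not depend on $c$.

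The step count is now a free tie-breaker among pairs $(r_k^\star,m)\in\mathcal{F}_k$. To justify choosing $m=M_k$ (more precisely $\min\{M,M_k\}$), I would use the training-enforced monotonicity of $\textrm{QoS}$ in $m$ stated after \eqref{eq:Dbar}. If $(r_k^\star,m)$ is feasible for some $m$, then every $m'$ with $m\le m'\le M_k$ also satisfies $C_1$ and $C_2$, so $(r_k^\star,M_k)\in\mathcal{F}_k$. I would state explicitly that this step relies on $Q[r,\cdot,\rho]$ being nondecreasing in the profiled table. Taking the largest $m$ costs nothing in the objective and leaves the most QoS margin. Consequently $T_k^\star(c)=\alpha_k/c$ with $\alpha_k\triangleq \ell(r_k^\star)/(B\varepsilon_k)>0$.

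\emph{Allocation part.} With the rate–step pairs fixed, the problem becomes
\[
\min_{c}\ \sum_k \alpha_k/c_k \quad\text{s.t.}\quad c_k\in\mathbb{Z}_{\ge1},\ \ \sum_k c_k\le N,
\]
because $C_3$–$C_4$ only constrain the counts: any integer vector of counts can be realized by some assignment $a_{k,n}$. Each term $f_k(c)=\alpha_k/c$ is strictly decreasing, so using every subcarrier is optimal. Each term is also discretely convex, since
\[
\Delta_k(c)=\frac{\alpha_k}{c(c+1)}
\]
is positive and strictly decreasing in $c$. This is a separable convex resource-allocation problem over integers, for which marginal-gain greedy is optimal. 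That is exactly Algorithm~\ref{alg:alloc}: it starts at $c_k=1$ and repeatedly gives the next unit to the largest $\Delta_k$. The loop condition $\max_k\Delta_k>0$ always holds, so the algorithm distributes all $N-|\mathcal{K}_{\mathrm s}|$ remaining units.

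\emph{Main obstacle: proving greedy optimality cleanly.} I would use an exchange argument. Write $f_k(c_k)=f_k(1)-\sum_{j=1}^{c_k-1}\Delta_k(j)$, so minimizing the total latency is the same as maximizing the sum of the marginal gains that are "purchased." Because each $\Delta_k(\cdot)$ is decreasing, any feasible vector $c$ purchases a prefix of each user's gain sequence. The total gain is therefore at most the sum of the $N-|\mathcal{K}_{\mathrm s}|$ largest elements of the multiset $\{\Delta_k(j)\}_{k,\,j\ge1}$.

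Greedy attains this bound. By monotonicity, the currently available head of each user's sequence is the largest remaining gain for that user, so each pop selects the global maximum of the items not yet taken. Ties do not change the attained sum. Hence the greedy allocation $\{c_k\}$, together with the pairs $(r_k^\star,M_k)$, is globally optimal for (P1) under $\tau_k\to0$. Finally, I would remark that optimality of the $(r,m)$ choice holds for every $c$, so the joint problem over $(\mathbf r,\mathbf m,\mathbf a)$ has no further coupling.
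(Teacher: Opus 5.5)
Your proof is correct and follows the paper's own argument. With $\tau_k=0$ the step count drops out of the objective, so each user takes its smallest feasible rate $r_k^\star$. The problem then collapses to $\min\sum_k A_k/c_k$ with strictly diminishing gains $\Delta_k(c)=A_k/(c(c+1))$, and the greedy collects exactly the $N-|\mathcal{K}_{\mathrm s}|$ largest elements of the fixed multiset of gains. You make explicit the monotonicity of $Q[r,\cdot,\rho]$ in $m$ that justifies $(r_k^\star,\min\{M,M_k\})\in\mathcal{F}_k$; the paper leaves this implicit, and it is only a tie-break when $\tau_k=0$ exactly. You also replace the citation with a self-contained prefix/exchange argument; both are refinements of the same proof, not a different route.
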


\begin{IEEEproof}[Proof]
The argument has two parts: first we fix each user's rate-step pair, then we show that the resulting subcarrier split is exactly the one the greedy finds.

\emph{(i) Rate-step Pair:} With $\tau_k\!\to\!0$, the latency of user $k$ at rate-step pair $(r,m)$ with $c$ subcarriers is $\ell(r)/(c B\varepsilon_k)$: it grows with the payload $\ell(r)$ and is independent of $m$. Minimizing latency therefore always prefers the smallest feasible rate $r_k^\star$, which by Eq.~\eqref{eq:feasible} stays feasible provided enough denoising steps are spent, up to the energy cap. Since denoising costs no time in this regime, the user can spend as many steps as its budget permits to meet the QoS, and reduce the number of transmitted symbols.

\emph{(ii) Subcarrier split:} Substituting $r_k^\star$ yields $T_k^\star(c)=A_k/c$ with $A_k=\ell(r_k^\star)/(B\varepsilon_k)>0$, which is strictly decreasing and strictly convex in the integer $c\ge 1$. The latency saved by giving user $k$ one additional subcarrier is given as:
\begin{equation}
\Delta_k(c)=A_k\Big(\tfrac{1}{c}-\tfrac{1}{c+1}\Big)=\frac{A_k}{c(c+1)},
\end{equation}
which strictly decreases in $c$. Each extra subcarrier helps less than the
previous one. Problem (P1) then collapses to
\begin{equation}
\min \sum_k \frac{A_k}{c_k}
\quad\text{s.t.}\quad \sum_k c_k\le N,\ \ c_k\ge 1\ \text{integer},    
\end{equation}
a separable resource-allocation problem with convex costs and diminishing marginal gains. For this classical problem, the incremental greedy, which repeatedly awards the next unit to whichever user currently offers the largest marginal gain, is globally optimal~\cite{federgruen1986greedy,7892899}. The reason is that the multiset of all achievable one-subcarrier gains $\{\Delta_k(c)\}$ is fixed in advance, and because the gains only shrink with $c$, awarding them in decreasing order (precisely what the greedy does after the mandatory first subcarrier per user) selects exactly the $N-|\mathcal{K}_{\mathrm s}|$ largest of them. Algorithm~\ref{alg:alloc} implements this rule and therefore attains the optimum of (P1).
\end{IEEEproof}

\begin{table}[t]
\caption{Main Simulation Parameters}
\label{tab:params}
\centering
\begin{tabular}{ll}
\toprule
Parameter & Value \\
\midrule
Source dataset (train/test) & DIV2K \\
Max latent channels $C_{\max}$ & $32$ \\
Supported rates $\mathcal{R}$ & $\{8,16,24,32\}$ \\
Corresponding CBR & $\{1/48,\,1/24,\,1/16,\,1/12\}$ \\
Denoising steps $\mathcal{M}$ & $\{0,1,\dots,5\}$ \\
Training SNRs (sampled) & $\{1,4,7,10,13\}$ dB \\
Profiling SNR grid $\mathcal{S}$ & $1$-$13$ dB ($1$-dB spacing) \\
Number of subcarriers $N$ & $32$ \\
Subcarrier bandwidth $B$ &  $50$ MHz\\
Mobile Computing Frequency $F_{k}$ &  $2.5$ TFLOPs\\
Optimizer / learning rate & Adam / $5\times10^{-5}$ \\
\bottomrule
\end{tabular}
\end{table}
When denoising time is not negligible ($\tau_k>0$), this guarantee can break. Each user's best-response latency $T_k^\star(c)$ in~\eqref{eq:bestresp} is now the lower envelope (pointwise minimum) of the $|\mathcal{F}_k|$ curves $\ell(r)/(cB\varepsilon_k)+m\tau_k$, one per feasible rate-step pair. Every such curve is convex and decreasing in $c$, but the pointwise minimum of convex functions need not be convex, so the marginal gains $\Delta_k(c)$ need no longer decrease monotonically. Intuitively, one extra subcarrier might barely help a user, but a second one can help a great deal, since together they free up enough spectrum for the user to jump to a cheaper choice altogether, one with a higher rate and fewer denoising steps needed. This behavior breaks the diminishing-returns property required by our optimality proof. Therefore, when denoising time is not negligible, we use it as a heuristic and evaluate its performance through experiments.

\begin{figure}[t]
        \centering
        \includegraphics[width=1\linewidth]{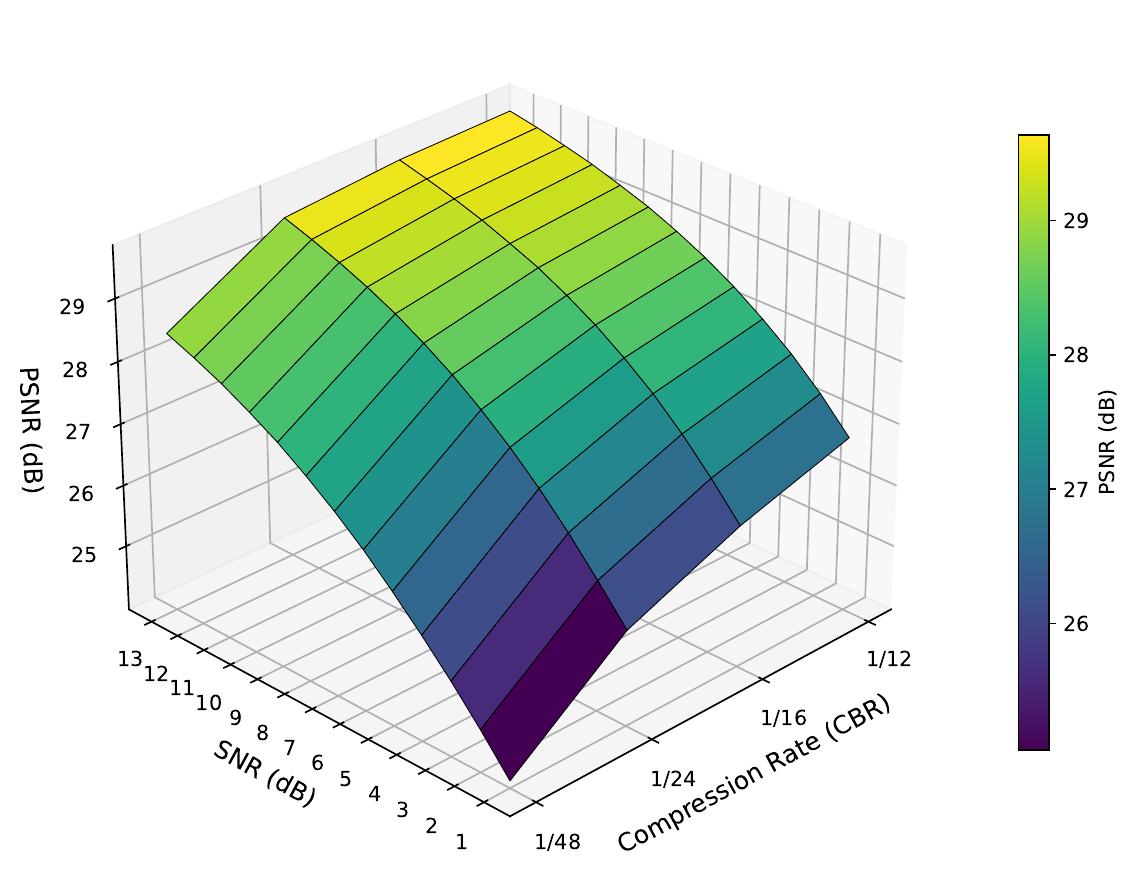}
        \caption{The average PSNR metric versus SNR and CBR.}
    \label{fig:cbr_results}
\end{figure}

\section{Simulation Results}\label{sec:results}

\subsection{Simulation Setup}
\textbf{Datasets:} The deep joint source-channel encoder/decoder and denoiser are trained on the DIV2K high-resolution image dataset with 800 images and evaluated on the remaining 100 images in the testing dataset, following common practice in DeepJSCC evaluation. 

\textbf{Baselines:} To demonstrate the effectiveness of our proposal, we compare our proposed framework with these benchmarks:
\begin{itemize}
    \item Rate Adaptation Only: This benchmark adjusts the compression rate of the transmitted signal from the BS to users in order to meet their demand, with m=$0$ fixed. 
    \item Denoising Only: This benchmark fixes the compression rate and offloads the task of meeting the QoS to local devices through a denoising process. 
    \item Maximum-rate: The benchmark blindly selects the highest available compression rate for each user. 
    \item Randomly Select from Feasible: The scenario randomly selects a pair of solutions from the feasible set. We consider this scenario to illustrate the performance of our resource allocation approach.
\end{itemize}

\textbf{Metrics:} As shown in the formulated problem, our target is to reduce the total latency, including the wireless transmission time alongside the computing time incurred by the denoising step. We adopt the PSNR as the QoS metric requested by each user, since it varies inversely with the mean squared error (MSE) between the original and reconstructed image:
\begin{equation}
\textrm{PSNR}= 10\log_{10}\frac{\textrm{MAX}^{2}}{\textrm{MSE}},
\label{PSNRreverseMSE}
\end{equation}
Here, \textrm{MAX} represents the largest attainable pixel intensity in the image \cite{10423076}, which equals 255 for images encoded with 8 bits per pixel in each color channel. In addition, we also adopt the multi-scale structural similarity index measure (MS-SSIM)~\cite{wang2003multiscale} as the metric for the image quality.

\begin{table}[t]
\centering
\caption{MS-SSIM under different channel SNRs and compression rates without denoising steps.}
\label{tab:msssim_cbr}
\renewcommand{\arraystretch}{1.15}
\setlength{\tabcolsep}{8pt}
\begin{tabular}{c|cccc}
\hline
\textbf{SNR} & \textbf{CR=1/48} & \textbf{CR=1/24} & \textbf{CR=1/16} & \textbf{CR=1/12} \\
\hline
1 dB  & 0.8008 & 0.8611 & 0.8879 & 0.9031 \\
2 dB  & 0.8294 & 0.8820 & 0.9051 & 0.9181 \\
3 dB  & 0.8533 & 0.8995 & 0.9193 & 0.9298 \\
4 dB  & 0.8731 & 0.9137 & 0.9303 & 0.9387 \\
5 dB  & 0.8894 & 0.9251 & 0.9389 & 0.9455 \\
6 dB  & 0.9031 & 0.9342 & 0.9456 & 0.9509 \\
7 dB  & 0.9142 & 0.9416 & 0.9509 & 0.9552 \\
8 dB  & 0.9234 & 0.9474 & 0.9552 & 0.9587 \\
9 dB  & 0.9308 & 0.9522 & 0.9586 & 0.9615 \\
10 dB & 0.9370 & 0.9561 & 0.9614 & 0.9638 \\
11 dB & 0.9420 & 0.9593 & 0.9637 & 0.9657 \\
12 dB & 0.9460 & 0.9619 & 0.9656 & 0.9671 \\
13 dB & 0.9494 & 0.9640 & 0.9671 & 0.9683 \\
\hline
\end{tabular}
\end{table}
\subsection{The performance of the Adaptive Codec}

As shown in Fig.~\ref{fig:cbr_results}, both PSNR and MS-SSIM increase monotonically with SNR and CBR, confirming that the single adaptive codec generalizes smoothly across all tested SNR and compression-rate configurations, without any rate- or channel- specific retraining. The gains from a longer signal are most noticeable under poor channel conditions: at $1$ dB, raising the CBR from $1/48$ to $1/12$ improves PSNR by $2.72$ dB from $24.06$ to $26.78$ dB and MS-SSIM by $0.1023$. While at $13$ dB, the same rate increase yields only a $1.27$ dB PSNR gain. This convergence at high SNR indicates the system can efficiently achieve good performance without the need for a longer signal; the channel is already clean, while at low SNR, the signal contains a large amount of noise which demands more information to achieve high performance. The same phenomenon can be observed in the MS-SSIM metric in Table~\ref{tab:msssim_cbr}: the improvement is $0.1023$ from compression rate $1/48$ to $1/12$ at $1$ dB, while this improvement value is only $0.0189$ at $13$ dB.

\subsection{Value of Receiver Computation}

\begin{figure}[t]
    \centering
    \includegraphics[width=\linewidth]{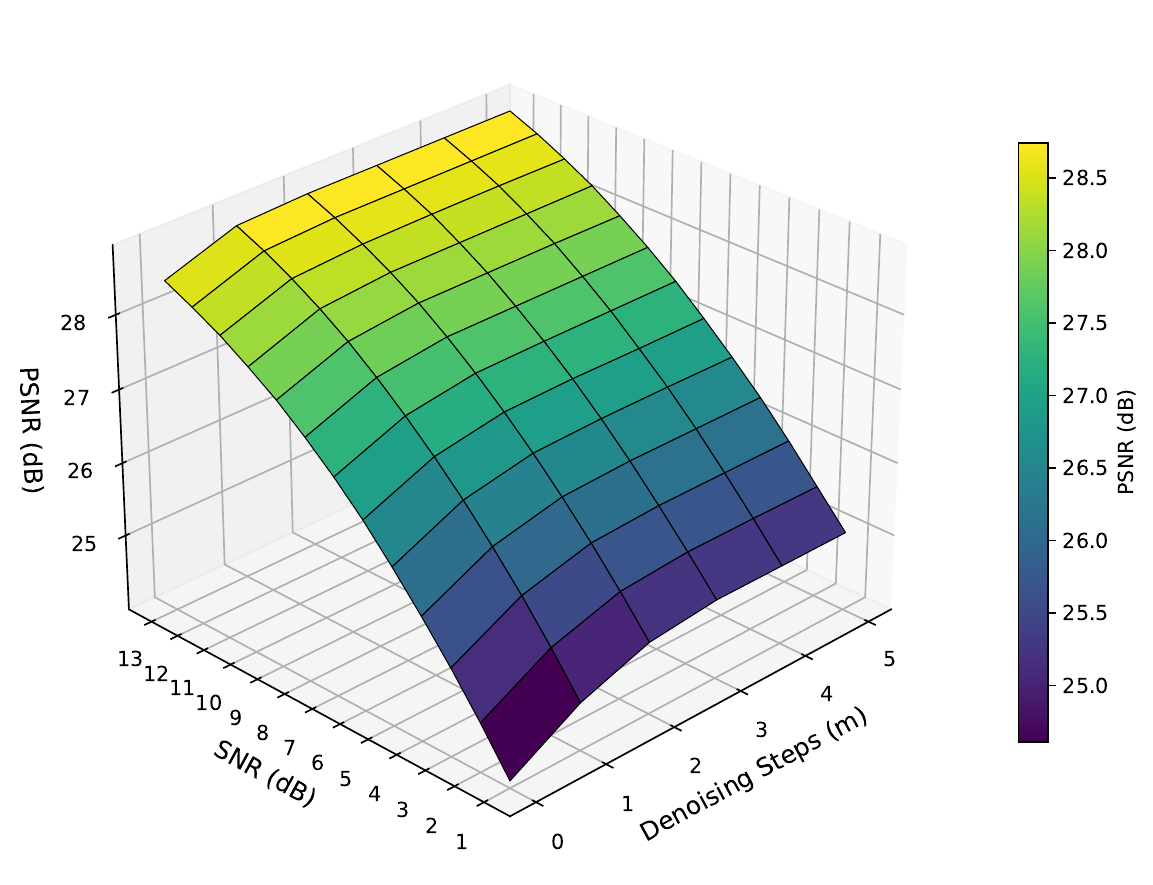}
    \caption{Effect of the denoising iteration number on reconstruction quality under different channel SNRs under the compression rate $1/48$.}
    \label{fig:denoise_results}
\end{figure}

\begin{table}[t]
\centering
\caption{MS-SSIM under different channel SNRs and denoising iterations ($\mathrm{CR}=1/48$).}
\label{tab:msssim_denoise}
\renewcommand{\arraystretch}{1.15}
\setlength{\tabcolsep}{7pt}
\scalebox{0.95}{
\begin{tabular}{c|cccccc}
\hline
\textbf{SNR} & $\mathbf{m=0}$ & $\mathbf{m=1}$ & $\mathbf{m=2}$ & $\mathbf{m=3}$ & $\mathbf{m=4}$ & $\mathbf{m=5}$ \\
\hline
1 dB  & 0.8008 & 0.8115 & 0.8375 & 0.8502 & 0.8519 & 0.8518 \\
2 dB  & 0.8294 & 0.8373 & 0.8584 & 0.8677 & 0.8685 & 0.8683 \\
3 dB  & 0.8533 & 0.8593 & 0.8763 & 0.8829 & 0.8831 & 0.8830 \\
4 dB  & 0.8731 & 0.8776 & 0.8913 & 0.8956 & 0.8958 & 0.8957 \\
5 dB  & 0.8894 & 0.8927 & 0.9039 & 0.9067 & 0.9066 & 0.9065 \\
6 dB  & 0.9031 & 0.9055 & 0.9144 & 0.9160 & 0.9159 & 0.9158 \\
7 dB  & 0.9142 & 0.9163 & 0.9232 & 0.9241 & 0.9240 & 0.9239 \\
8 dB  & 0.9234 & 0.9253 & 0.9305 & 0.9309 & 0.9309 & 0.9308 \\
9 dB  & 0.9308 & 0.9327 & 0.9366 & 0.9368 & 0.9368 & 0.9367 \\
10 dB & 0.9370 & 0.9390 & 0.9417 & 0.9418 & 0.9418 & 0.9417 \\
11 dB & 0.9420 & 0.9440 & 0.9460 & 0.9460 & 0.9460 & 0.9459 \\
12 dB & 0.9460 & 0.9481 & 0.9495 & 0.9495 & 0.9495 & 0.9494 \\
13 dB & 0.9494 & 0.9514 & 0.9524 & 0.9524 & 0.9523 & 0.9523 \\
\hline
\end{tabular}}
\end{table}
Fig.~\ref{fig:denoise_results} shows the relation between the reconstruction quality and the number of denoising steps $m$ and the channel condition reflected by the SNRs, confirming that the receiver-side refinement consistently improves fidelity. However, the gains follow a clear pattern, saturating after approximately $m$= $3$ steps: at $1$ dB, PSNR improves by $0.98$ dB and MS-SSIM by $0.0511$ as $m$ increases from $0$ to $3$. This empirical result indicates that the denoising module can actually improve the performance but with limited additional gains. At higher SNRs, both the achievable gain and the number of steps needed to reach it shrink substantially. Specifically, at 13 dB, PSNR improves by only 0.37 dB, and the improvement saturates by $m$ = 2. This confirms that on-device denoising shares the same property as the increasing signal length approach and is most valuable precisely when the channel condition is poor.
\begin{table*}[t]
\centering
\caption{PSNR (dB) and MS-SSIM under different channel SNRs, compression rates (CRs), and denoising iterations.}
\label{tab:master_psnr_msssim}
\renewcommand{\arraystretch}{1.3}
\setlength{\tabcolsep}{3.0pt}
\scriptsize
\begin{tabular}{cc|cccccc|cccccc|cccccc}
\hline
& &
\multicolumn{6}{c|}{\textbf{CR = 1/24}} &
\multicolumn{6}{c|}{\textbf{CR = 1/16}} &
\multicolumn{6}{c}{\textbf{CR = 1/12}} \\
\cline{3-20}
\textbf{SNR} & \textbf{Metric}
& $m_0$ &$m_1$ &$m_2$ &$m_3$ &$m_4$ &$m_5$
& $m_0$ &$m_1$ &$m_2$ &$m_3$ &$m_4$ &$m_5$
& $m_0$ &$m_1$ &$m_2$ &$m_3$ &$m_4$ &$m_5$
\\
\hline
\multirow{2}{*}{1} & PSNR &
25.49&26.17&26.44&26.53&26.52&26.52&26.24&26.95&27.21&27.28&27.28&27.27&26.78&27.40&27.70&27.76&27.76&27.75\\
& MS-SSIM &
0.8611&0.8743&0.8883&0.8969&0.8977&0.8976&0.8879&0.8989&0.9107&0.9169&0.9174&0.9172&0.9031&0.9103&0.9231&0.9280&0.9282&0.9281\\
\cline{2-20}
\multirow{2}{*}{2} & PSNR &
26.05&26.67&26.91&26.97&26.97&26.96&26.78&27.40&27.63&27.68&27.67&27.67&27.28&27.82&28.09&28.13&28.12&28.12\\
& MS-SSIM &
0.8820&0.8918&0.9031&0.9091&0.9095&0.9094&0.9051&0.9124&0.9221&0.9263&0.9265&0.9263&0.9181&0.9221&0.9325&0.9358&0.9358&0.9357\\
\cline{2-20}
\multirow{2}{*}{3} & PSNR &
26.58&27.13&27.34&27.38&27.38&27.38&27.27&27.81&28.02&28.05&28.04&28.04&27.71&28.20&28.42&28.45&28.45&28.45\\
& MS-SSIM &
0.8995&0.9061&0.9157&0.9196&0.9197&0.9196&0.9193&0.9237&0.9316&0.9342&0.9343&0.9342&0.9298&0.9319&0.9402&0.9423&0.9423&0.9422\\
\cline{2-20}
\multirow{2}{*}{4} & PSNR &
27.06&27.55&27.74&27.77&27.76&27.76&27.70&28.18&28.36&28.38&28.38&28.38&28.07&28.54&28.73&28.76&28.75&28.75\\
& MS-SSIM &
0.9137&0.9183&0.9260&0.9283&0.9283&0.9284&0.9303&0.9330&0.9394&0.9410&0.9410&0.9409&0.9387&0.9400&0.9465&0.9478&0.9477&0.9477\\
\cline{2-20}
\multirow{2}{*}{5} & PSNR &
27.49&27.93&28.10&28.12&28.11&28.11&28.06&28.51&28.67&28.69&28.68&28.68&28.39&28.84&29.00&29.02&29.02&29.01\\
& MS-SSIM &
0.9251&0.9282&0.9343&0.9358&0.9358&0.9357&0.9389&0.9407&0.9456&0.9466&0.9465&0.9465&0.9455&0.9466&0.9516&0.9524&0.9523&0.9523\\
\cline{2-20}
\multirow{2}{*}{6} & PSNR &
27.87&28.28&28.43&28.45&28.44&28.44&28.38&28.80&28.94&28.96&28.96&28.95&28.66&29.10&29.24&29.25&29.25&29.25\\
& MS-SSIM &
0.9342&0.9365&0.9413&0.9421&0.9421&0.9420&0.9456&0.9470&0.9508&0.9513&0.9513&0.9512&0.9509&0.9520&0.9559&0.9562&0.9562&0.9561\\
\cline{2-20}
\multirow{2}{*}{7} & PSNR &
28.21&28.60&28.73&28.74&28.73&28.73&28.65&29.06&29.19&29.20&29.20&29.20&28.89&29.32&29.44&29.45&29.46&29.46\\
& MS-SSIM &
0.9416&0.9435&0.9470&0.9475&0.9474&0.9474&0.9509&0.9522&0.9551&0.9553&0.9553&0.9553&0.9552&0.9564&0.9592&0.9594&0.9594&0.9593\\
\cline{2-20}
\multirow{2}{*}{8} & PSNR &
28.50&28.89&28.99&29.00&29.00&29.00&28.89&29.29&29.40&29.41&29.41&29.41&29.09&29.52&29.62&29.63&29.63&29.63\\
& MS-SSIM &
0.9474&0.9492&0.9518&0.9521&0.9520&0.9520&0.9552&0.9565&0.9586&0.9587&0.9587&0.9587&0.9587&0.9600&0.9620&0.9621&0.9620&0.9620\\
\cline{2-20}
\multirow{2}{*}{9} & PSNR &
28.76&29.14&29.23&29.23&29.23&29.23&29.08&29.49&29.58&29.59&29.59&29.59&29.25&29.68&29.77&29.78&29.78&29.78\\
& MS-SSIM &
0.9522&0.9538&0.9558&0.9559&0.9559&0.9558&0.9586&0.9599&0.9615&0.9615&0.9615&0.9614&0.9615&0.9628&0.9643&0.9643&0.9643&0.9642\\
\cline{2-20}
\multirow{2}{*}{10} & PSNR &
28.98&29.35&29.43&29.44&29.43&29.43&29.26&29.66&29.74&29.75&29.75&29.75&29.40&29.82&29.90&29.91&29.91&29.91\\
& MS-SSIM &
0.9561&0.9577&0.9591&0.9592&0.9591&0.9591&0.9614&0.9627&0.9638&0.9638&0.9638&0.9638&0.9638&0.9651&0.9661&0.9661&0.9661&0.9661\\
\cline{2-20}
\multirow{2}{*}{11} & PSNR &
29.17&29.55&29.61&29.61&29.62&29.61&29.41&29.80&29.87&29.88&29.88&29.88&29.52&29.94&30.00&30.01&30.01&30.01\\
& MS-SSIM &
0.9593&0.9608&0.9619&0.9618&0.9618&0.9618&0.9637&0.9650&0.9658&0.9658&0.9657&0.9657&0.9657&0.9669&0.9676&0.9676&0.9676&0.9676\\
\cline{2-20}
\multirow{2}{*}{12} & PSNR &
29.34&29.71&29.76&29.77&29.76&29.76&29.54&29.93&29.99&29.99&29.99&29.99&29.63&30.04&30.09&30.10&30.10&30.10\\
& MS-SSIM &
0.9619&0.9634&0.9641&0.9641&0.9641&0.9640&0.9656&0.9668&0.9674&0.9673&0.9673&0.9673&0.9671&0.9683&0.9688&0.9688&0.9688&0.9688\\
\cline{2-20}
\multirow{2}{*}{13} & PSNR &
29.49&29.84&29.89&29.89&29.89&29.89&29.66&30.03&30.08&30.08&30.08&30.09&29.72&30.11&30.16&30.17&30.17&30.17\\
& MS-SSIM &
0.9640&0.9654&0.9660&0.9659&0.9659&0.9659&0.9671&0.9682&0.9686&0.9686&0.9686&0.9686&0.9683&0.9695&0.9698&0.9698&0.9698&0.9698\\
\hline
\end{tabular}
\end{table*}

The above results are obtained for the compression rate $1/48$, while Table~\ref{tab:master_psnr_msssim} captures a complete performance behavior of our denoising module under different compression rates, confirming that the trends observed at CR $=1/48$ generalize across the operating range. At every compression rate, PSNR and MS-SSIM typically increase with SNR and rate, and denoising again exhibits limited gains in improving performance once the denoising steps exceed three. It is worth noting that the absolute denoising gain remains largest at low SNR: at SNR $=1$ dB, steps $m_0\rightarrow m_3$ improve PSNR by about 1.0 dB at CR $=1/24$ and $1/16$, and 0.98 dB at CR $=1/12$. This performance gain under the harsh wireless condition is actually the target we want to achieve, since when the channel condition is good, the user QoS is already satisfied without the need for signal denoising.

\begin{figure}[t]
\centering
\includegraphics[width=0.5\textwidth]{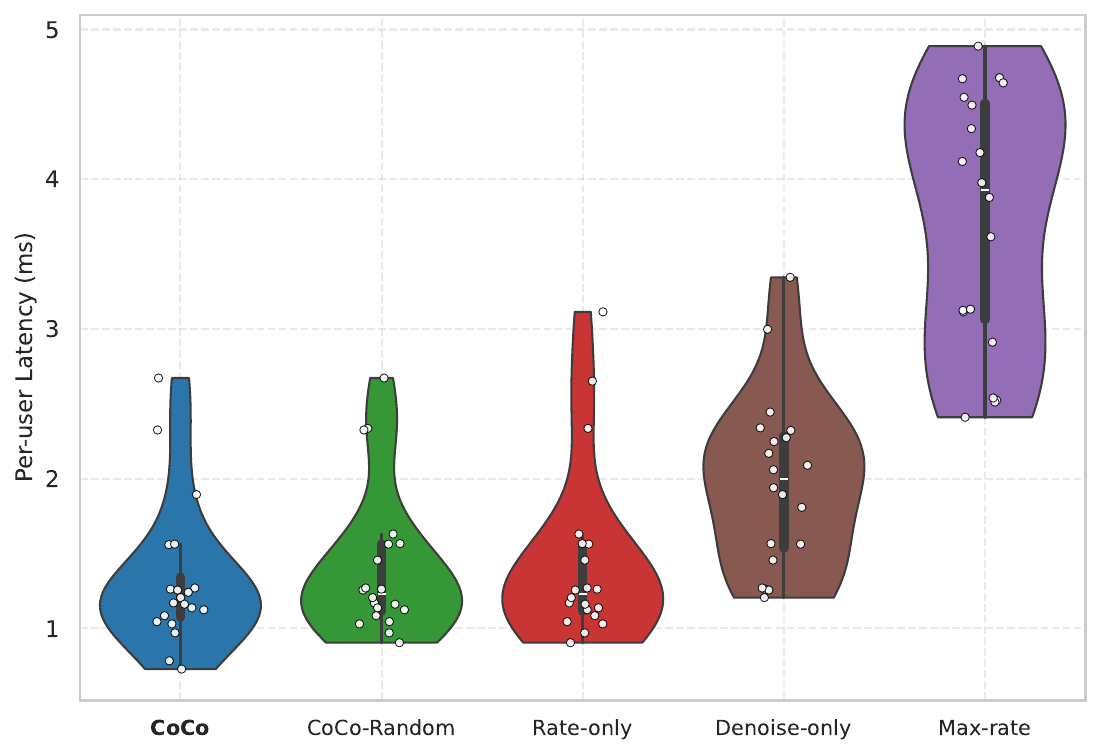}
\caption{Per-user latency distribution under CoCo versus four benchmarks ($K=20$ users, $N=32$ subcarriers).}
\label{figcocobenchamrks}
\end{figure}

\subsection{The efficiency of the Compression-Computation framework}

Fig.~\ref{figcocobenchamrks} compares the per-user latency distributions of the proposed CoCo allocation against four benchmarks: rate-only, denoise-only, max-rate, and CoCo-Random, which selects a set from our constructed feasible set. CoCo achieves the lowest average latency with $1.324$ ms while meeting every user's quality requirement. CoCo-Random, which also focuses on improving the quality by computing and communicating but randomly selects a feasible set of compression rate and denoising, which achieves the average latency at $1.408$ ms, confirming that most of CoCo's gain comes from jointly optimizing rate and denoising steps rather than from subcarrier allocation alone. The rate-only and denoise-only secure $1.446$ ms and $1.975$ ms in latency, respectively. This result shows the inefficiency of exploiting a single approach, and these approaches also leave two users' demands unmet. Always-max-rate performs worst by far, at $3.715$ ms, showing that indiscriminately maximizing quality wastes communication bandwidth. These results demonstrate that co-adapting compression and computation is essential for latency-efficient, quality-guaranteed multi-user delivery.

\begin{figure*}[t]
\centering
\includegraphics[width=1\textwidth]{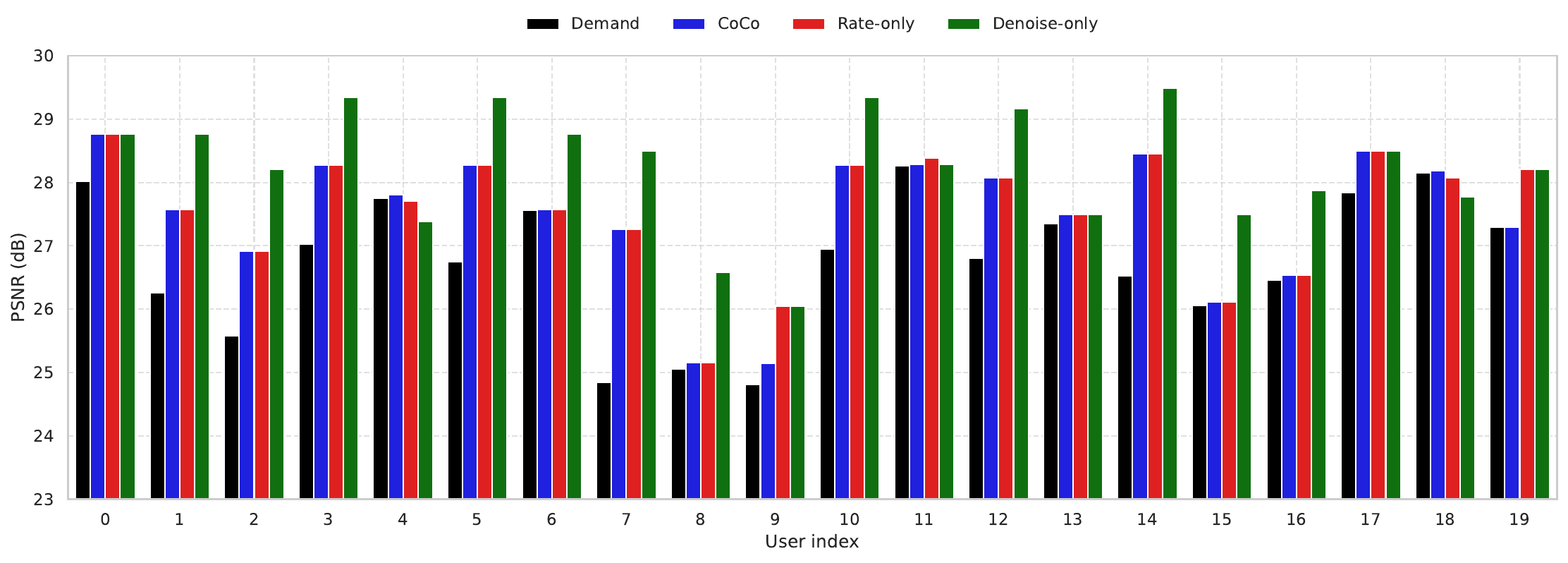}
\caption{Per-user PSNR demand versus achieved quality under CoCo, Rate-only, and Denoise-only. CoCo meets every user's target with minimal overshoot, while the single-axis baselines both under-serve some users and over-provision others.}
\label{figdemandsandachieve}
\end{figure*}

Fig.~\ref{figdemandsandachieve} compares each user's PSNR demand against the achieved quality under CoCo, Rate-only, and Denoise-only. CoCo meets every user's target while keeping overshoot modest, at $0.73$ dB on average, since it can freely trade compression rate against denoising steps to facilitate the requirement. However, the rate-only and denoise-only both fail two users whose targets cannot be reached. It is worth noticing that the denoise-only overshoots by $1.50$ dB on average, which indicates that it allocates more resources for users with low expectation while failing to assist users with high demand and under low SNR. This scenario can be explained by the fixed compression rate of the denoising-only benchmark. The fixed compression rate also leads to higher latency, and the system cannot gain from the overshooting scenario but can lose profit due to under-serving clients. The proposed CoCo provides a flexible mechanism to balance compression and computation capacity for user demand while keeping latency low.

\begin{table}[t]
\caption{Selected Compression Rates, Denoising Steps, and On-Device Denoising Energy}
\label{tab:energy}
\centering
\begin{tabular}{lccccc}
\toprule
 & \multicolumn{3}{c}{Selected operating points} & \multicolumn{2}{c}{Energy} \\
\cmidrule(lr){2-4}\cmidrule(lr){5-6}
Policy & Rates $r_k$ & $\bar{m}$ & Latency & Total & Per user \\
 & (\#users) & (steps) & (ms) & (mJ) & (mJ) \\
\midrule
CoCo          & $14/4/2/0$  & $0.25$ & $\mathbf{1.324}$ & $40.37$ & $2.02$ \\
CoCo-Random   & $12/5/3/0$  & $0.10$ & $1.408$ & $16.15$ & $0.81$ \\
Rate-only     & $12/5/1/2$  & $0$    & $1.446$ & $0$     & $0$ \\
Denoise-only  & $0/20/0/0$  & $0.35$ & $1.975$ & $56.52$ & $2.83$ \\
Always-max    & $0/0/0/20$  & $0$    & $3.715$ & $0$     & $0$ \\
\bottomrule
\end{tabular}
\\[2pt]
\footnotesize ``Rates'' lists the number of users served at $1/48$, $1/24$, $1/16$,
and $1/12$, respectively; $\bar{m}$ is the average number of denoising steps.
\end{table}

Table~\ref{tab:energy} compares the selected compression rates, denoising steps, latency, and energy consumption of different policies. CoCo mainly relies on high compression, assigning $14$ out of $20$ users to the $1/48$ rate, and only uses denoising when it is needed. As a result, the average number of denoising steps is only $0.25$, leading to a total energy consumption of $40.37$~mJ ($2.02$~mJ per user). In contrast, Denoise-only depends more on computation, increasing the average denoising steps to $0.35$ and the energy consumption to $56.52$~mJ, while still having higher latency ($1.975$~ms compared to $1.324$~ms).

\begin{figure}[t]
\centering
\includegraphics[width=0.5\textwidth]{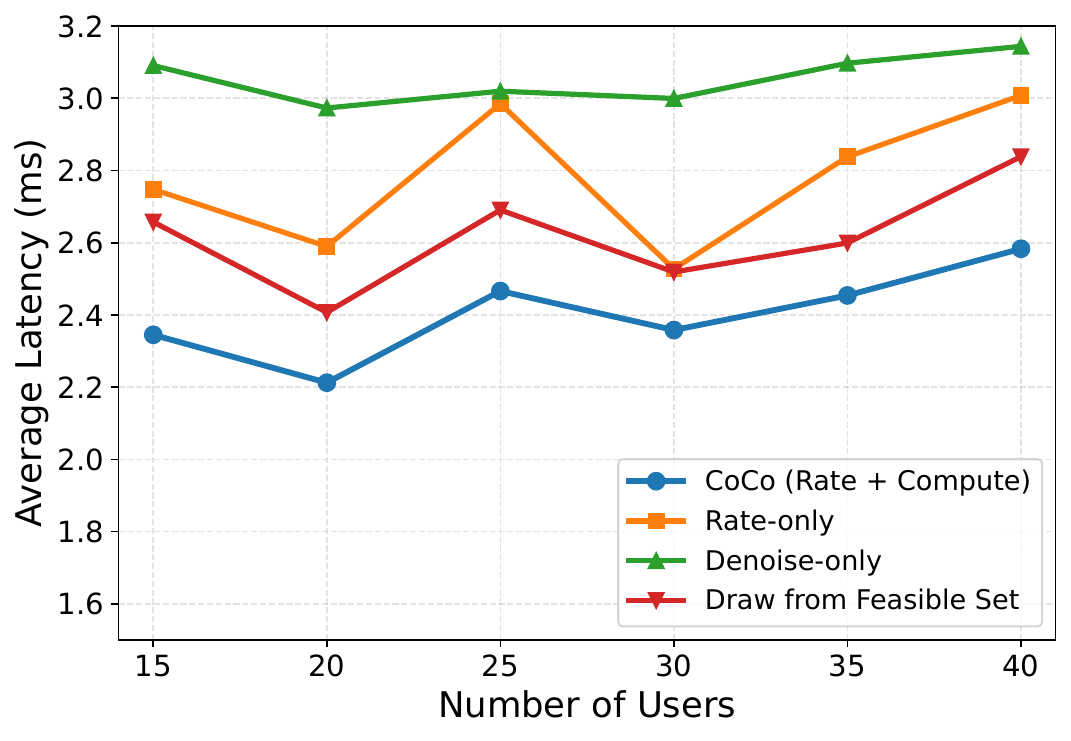}
\caption{The performance of the proposal compared with other approaches when the number of users varies and the number of sub-carriers is equal to the number of users.}
\label{userchanges}

\end{figure}

\begin{figure*}[t]
\centering
\includegraphics[width=1\linewidth]{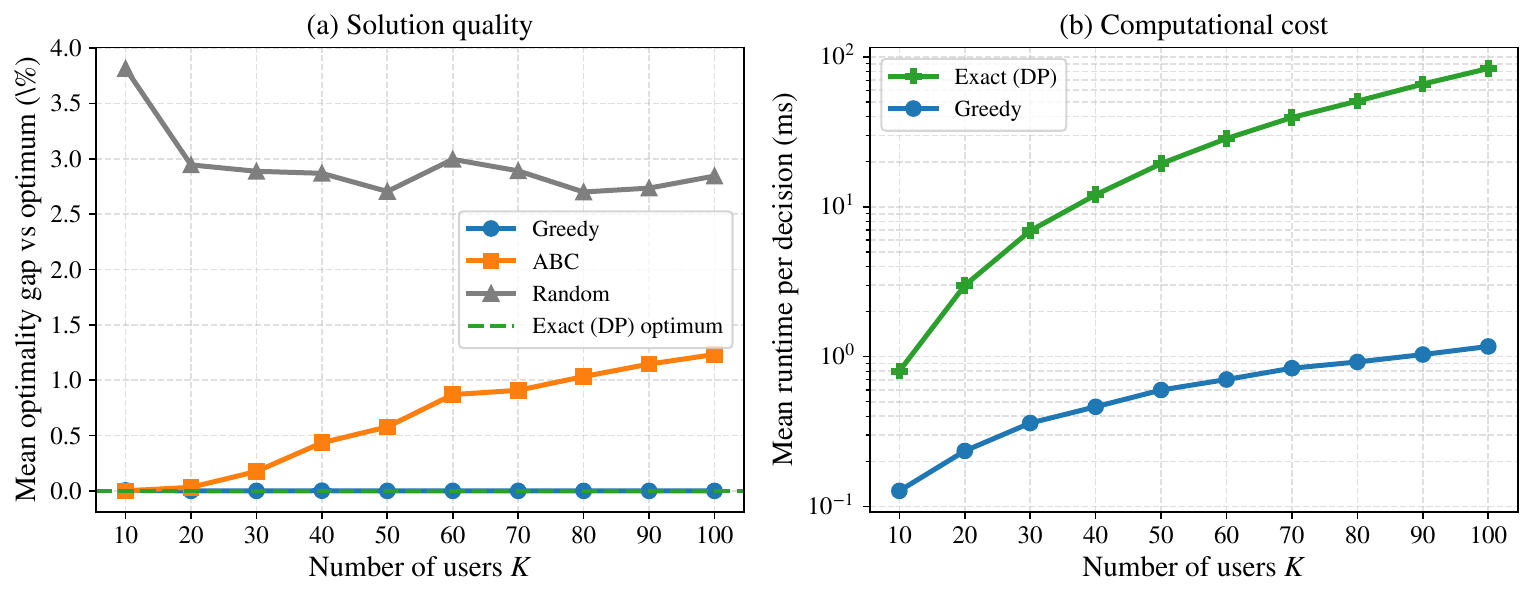}
\caption{Allocation quality and cost when the number of users $K$ increases. (a) Optimality gap against the exact optimum: the greedy tracks it almost exactly at every $K$, while ABC drifts further away as the search space grows. (b) The runtime when the number of decisions is increase: the greedy is cheaper than the exact solver.}
\label{fig:alloc_optimality}
\end{figure*}

\subsection{Change in Number of Users}
Up to this point, we have considered $20$ users with 32 sub-carriers, where a single user can have more than one sub-carrier to reduce the communication time. In this section, we present simulation results for different numbers of users, where each user can have at most one sub-carrier, which is typically the case in real-world environments. As shown in Fig.~\ref{userchanges}, we provide the latency of different adaptation strategies as the number of users increases from 15 to 40. The proposed CoCo obtains the lowest latency across all network sizes, demonstrating the advantage of jointly optimizing network communication bandwidth and local computation capacity. CoCo outperforms the rate-only approach by approximately $8$\%-$18$\%, indicating that communication resources alone are not the optimal solution to meet the user demands, and increasing the number of transmitted symbols can increase the latency. The denoising-only scheme incurs even higher latency since the number of sub-carriers is limited, resulting in unnecessary communication overhead for some users while underperforming for others with higher expectations. Finally, the random feasible allocation performs better than the single-dimension baselines but performs worse than the optimal pair, highlighting the importance of selecting optimal rate and denoising steps. Finally, the latency of our proposal remains stable as the number of users increases, which suggests the scalability of the method.

\subsection{Optimality Gap and Complexity of the Greedy Allocation}\label{sec:results-alloc}
Section~\ref{sec:alloc} shows the greedy allocator is provably optimal only when denoising time is negligible, and treats the general case as a heuristic to be tested empirically. Here, we evaluate how much Algorithm~\ref{alg:alloc} actually falls short of the true optimum once denoising time is not negligible, and the cost of it. We compare it against three references on problem instances drawn from the same distribution used above and scored with the measured lookup table $Q[r,m,\rho]$:
\begin{itemize}
  \item \emph{Exact optimum}: for a fixed number of subcarriers per user, each user's best rate-step choice no longer depends on any other user's, so~(P2) reduces to choosing each $c_k$ to minimize $\sum_k T_k^\star(c_k)$ subject to $\sum_k c_k\le N$. A dynamic program over the shared subcarrier budget solves this exactly in $\mathcal{O}(KN^2)$ time, for any $K$ or $N$~\cite{bellman1966dynamic}.
  \item \emph{Artificial Bee Colony (ABC)~\cite{karaboga2007powerful}}: a honey-bee swarm metaheuristic that searches the same operating-point space, with employed, onlooker, and scout bees, under a fixed evaluation budget.
  \item \emph{Random}: the Randomly Select from Feasible policy of Section~\ref{sec:results}, i.e., one feasible operating point per user drawn at random, with no optimization at all.
\end{itemize}
All three approaches use the identical exact subcarrier water-filling for the inner allocation, so that the difference in total latency comes entirely from how well each one searches the per-user rate and step choices. Across the $500$ instances spanning $K=10$ to $100$, the greedy matches the exact optimum in $494$ of them; the six exceptions differ by at most $0.24\%$. At this fidelity, the heuristic behaves as if it were an exact solver. ABC starts out just as strong at small $K$, but its fixed search budget cannot keep pace as the rate-step space grows due to the number of users increasing, and its gap widens from $0.03\%$ at $K=20$ to $1.23\%$ at $K=100$. The random baseline never competes: it stays $2.7$--$3.8\%$ above optimal regardless of $K$, the cost of not optimizing at all, as shown in Fig.~\ref{fig:alloc_optimality}.

\section{Conclusion}\label{sec:conclusion}
In this paper, we proposed CoCo, an energy-aware compression-computation co-adaptation framework for latency minimization in multi-user semantic communications. The base station has to serve users with heterogeneous signal-to-noise ratios, quality targets, and device energy budgets over a shared spectrum. To facilitate the user's requirement, the base station decides whether to assist it with a longer transmitted signal or offload the task to a local device to perform on-device denoising. Based on this, we have formulated an optimization problem to minimize the communication time of all the users in the network subject to per-user quality of service and energy constraints, where the control variables are the compression rates, the number of local denoising steps, and the number of allocated sub-carriers. Our training approach for the encoder, decoder, and denoiser facilitates the different modes with compression rate and denoising steps without the need to retrain. Simulation results demonstrated that the denoiser yields the largest quality gains at low rate and low SNR, when the user needs it the most, and the increase in the number of transmitted signals also improves performance under low SNR. Two approaches can complement each other instead of a trade-off, and our proposed co-adapting compression and computation effectively optimizes the communication bandwidth resource of the network and the computing capacity of users while reducing the total delivery latency and facilitating the QoS for each individual user. 

\bibliographystyle{IEEEtran}
\bibliography{mybib}

\end{document}